\newtheorem{theorem}{Theorem}
\newtheorem{claim}[theorem]{Claim}
\keywords{Deep Neural Networks, Watermarking, Cryptographic Chain}
\author{Brian Choi}
\affiliation{%
  \institution{Johns Hopkins University}
  % \city{Baltimore}
  % \state{Maryland}
  \country{}
}
\email{bchoi11@jhu.edu}
\author{Shu Wang}
\affiliation{%
  \institution{Palo Alto Networks, Inc.}
  % \city{Santa Clara}
  % \state{California}
  \country{}
}
\email{shuwang@paloaltonetworks.com}
\author{Isabelle Choi}
\affiliation{%
  \institution{University of California, Los Angeles}
  % \city{Los Angeles}
  % \state{California}
  \country{}
}
\email{isabellechoi11@g.ucla.edu}
\author{Kun Sun}
\affiliation{%
  \institution{George Mason University}
  % \city{Fairfax}
  % \state{Virginia}
  \country{}
}
\email{ksun3@gmu.edu}
\begin{document}
%---------------------------------------
\newcommand{\TN}{ChainMarks}
\title{\TN{}: Securing DNN Watermark with Cryptographic Chain}
\begin{abstract}
With the widespread deployment of deep neural network (DNN) models, dynamic watermarking techniques are being used to protect the intellectual property of model owners. However, recent studies have shown that existing watermarking schemes are vulnerable to watermark removal and ambiguity attacks. Besides, the vague criteria for determining watermark presence further increase the likelihood of such attacks. In this paper, we propose a secure DNN watermarking scheme named ChainMarks, which generates secure and robust watermarks by introducing a cryptographic chain into the trigger inputs and utilizes a two-phase Monte Carlo method for determining watermark presence. First, ChainMarks generates trigger inputs as a watermark dataset by repeatedly applying a hash function over a secret key, where the target labels associated with trigger inputs are generated from the digital signature of model owner. Then, the watermarked model is produced by training a DNN over both the original and watermark datasets. To verify watermarks, we compare the predicted labels of trigger inputs with the target labels and determine ownership with a more accurate decision threshold that considers the classification probability of specific models. Experimental results show that ChainMarks exhibits higher levels of robustness and security compared to state-of-the-art watermarking schemes. With a better marginal utility, ChainMarks provides a higher probability guarantee of watermark presence in DNN models with the same level of watermark accuracy.
\end{abstract}
\maketitle
%---------------------------------------
\section{Introduction}

Deep learning has shown its potential in multiple intelligent systems such as autonomous transportation, automated manufacture, and intelligent healthcare. However, the design and implementation of deep neural networks (DNNs) typically require significant resources for data collection, training, validation, and testing~\cite{miikkulainen2019evolving}. Because developing and possessing DNN models can provide a significant advantage, adversaries are highly motivated to steal the models for unauthorized use or resale. Therefore, it is crucial to protect the intellectual property (IP) of DNN models to avoid potential infringement; otherwise, it could impede the widespread deployment of these models.

Digital watermarking techniques are promising for safeguarding the intellectual property of DNN models by embedding covert information within the network for future verification. 
Similarly to traditional watermarking schemes designed for multimedia content, static watermarking techniques have been proposed to embed watermarks into the static parameters of DNN models (e.g., model weights) that are not changed during operation~\cite{uchida2017embedding, li2021spread, chen2019deepmarks, tartaglione2021delving}.
However, static DNN watermark solutions imply a white-box model that needs to access the model parameters during verification, which is not practical to protect the intellectual property of DNNs.

To better protect the intellectual property of DNN models, researchers develop dynamic DNN watermarking techniques by deliberately training a DNN model on both the original dataset and a watermark dataset~\cite{adi2018turning, zhang2018protecting}. By creating backdoors into the DNN model, these solutions output specific labels for a set of crafted inputs (i.e., watermark triggers). The over-parameterization of DNN models allows to plant the additional trigger inputs, i.e., watermarks, without affecting the overall classification accuracy of the DNN models. Trigger inputs can take the form of abstract images, adversarial examples, or inputs unrelated to the original task.

However, existing dynamic DNN watermarking schemes still face two challenges, namely, \emph{vulnerability to multiple watermark attacks} and \emph{vague criteria on watermark presence}. 
First, they are vulnerable to watermark removal attacks and watermark ambiguity attacks. A recent study~\cite{lukas2022sok} reveals the existing watermarking schemes are vulnerable to watermark removal attacks based on input preprocessing, model modification, or model extraction. Even worse, they are unable to resist watermark ambiguity attacks that allow attackers to forge additional watermarks to claim false DNN model ownership. One existing defense~\cite{fan2019rethinking} attempts to defeat watermark ambiguity attacks by adding a secret called digital passport as an extra input to the DNN watermarking scheme; however, attackers may leverage adversarial learning to find out an alternative qualified digital passport. Moreover, this method is vulnerable to the watermark removal attacks that are based on input preprocessing.
Second, existing watermarking schemes suffer from the unclear criteria for determining the presence of watermarks.
The accuracy of empirical estimation method for the watermark decision threshold is limited in practice, as the threshold is dependent on various practical factors such as model classification properties, types of watermark patterns, and their probability distributions~\cite{lukas2022sok}. Also, the existing estimation method is incapable of calculating the threshold in certain scenarios, e.g., when the $p$-value is extremely small.

In this paper, we propose \TN{}, a secure DNN watermarking scheme that can generate secure and robust watermarks by introducing \emph{one-way cryptographic chain relationships} into the watermark trigger inputs and utilizing a \emph{two-phase Monte Carlo estimation method} to determine the watermark presence.
\TN{} can efficiently defeat both watermark removal attacks and watermark ambiguity attacks by designing trigger inputs as a cryptographic chain.
First, the noise-like/pseudo-random triggers perform better than other forms. They are far away from the data distribution of regular tasks and are hence robust against removal attacks via fine-tuning, model pruning, and input preprocessing.
Second, the one-way chain property can stop the back-propagation algorithm of adversarial machine learning, which is used by watermark ambiguity attacks where attackers generate fake watermarks by satisfying multiple constraints (e.g., accuracy requirement). However, attackers cannot include hash functions in their optimization constraints and hence can only
perform guessing attacks.
Third, the accuracy of watermarks might not be 100\% due to the model post-processing; thus, unclear criteria for watermark presence may increase the likelihood of attacks.
To ensure detection accuracy, we develop a two-phase Monte Carlo method to enhance the estimation of watermark presence, by considering the probability distributions of both model classification and watermark generation.

\TN{} consists of two main modules, namely, watermark generation and watermark verification. %
To generate watermarks, we first construct a watermark dataset that consists of trigger inputs and their target labels. 
The trigger inputs are generated as a cryptographically chained sequence by repeatedly applying a one-way hash function over a secret key, which is a random seed selected by the model owner. Then, the digital signature of model owner is transformed into a number in base $C$ (that is, the number of classes), where each digit is assigned as the target label of a trigger input according to sequential order. By applying our cryptographic mechanism, the trigger inputs and their target labels are interrelated with the owner's seed key and digital signature, respectively.
Therefore, an adversary is unable to apply optimization-based adversarial attacks, since the acquired trigger inputs can only satisfy either the consistency of predicted labels with target labels or the presence of cryptographical chain in trigger inputs, but not both. 
To embed the watermarks, we train a watermarked DNN model over both the original and the watermark datasets. 

To verify the watermarks, the model owner provides the seed key to the verifier (e.g., a trusted third party) for regenerating the cryptographically chained trigger inputs, which are then fed into the DNN model to obtain the predicted labels. The verifier then calculates the Hamming distance between the predicted labels and the target labels extracted from the digital signature and determines the presence of watermarks based on a decision threshold, which is calculated via a two-phase Monte Carlo approach. 
In the first phase, the classification probability distribution of a DNN model is obtained by an empirical estimation method. In the second stage, our aim is to determine the number of additional random inputs required until the first hit occurs in the classes with the classification probability of zero. Deriving this threshold is equivalent to obtaining the probability distribution for the number of matching labels. The proposed two-phase Monte Carlo method enables us to obtain more accurate bounds on the distribution, since our method uses the real output distribution of DNN models instead of directly modeling the output classes with empirical probabilities. This advantage becomes more significant when the $p$-value is extremely small, as existing methods derive an output probability of zero.

To evaluate the robustness and security of \TN{}, we conduct extensive experiments on the watermarked models trained on CIFAR-10 / CIFAR-100 datasets.
Experiments show that our embedded watermarks do not affect the test accuracy of the original tasks. Compared with four state-of-the-art dynamic DNN watermarking schemes, \TN{} shows higher robustness against 16 watermark removal attacks. Moreover, due to the introduction of a cryptographic chain, \TN{} can resist the watermark ambiguity attack, which can easily bypass other schemes.
By investigating the effects of $p$-values, our proposed threshold estimation method is applicable to the scenarios with smaller $p$-values (i.e., higher level of security). In addition, the marginal utility of \TN{} is higher than that of other existing schemes, providing a higher probability guarantee of the watermark presence in the DNN models with the same level of watermark accuracy.

In summary, we make the following contributions:
\vspace{-0.03in}
\begin{itemize}
\item We propose a secure DNN watermarking scheme, \TN{}, which introduces a cryptographic chain into trigger inputs to counter both watermark removal and watermark ambiguity attacks.
\item We propose a new two-phase Monte Carlo method to estimate the decision threshold for watermark presence, providing a more accurate estimation and applying to the scenarios with higher security level.
\item We conduct extensive experiments to prove the robustness and security of \TN{} by comparing it with four state-of-the-art watermarking schemes against 17 watermark attacks over different models.
\end{itemize}

\vspace{-0.05in}
\section{Background}

\begin{figure}[t]
    \centering
    \includegraphics[width=2.8in]{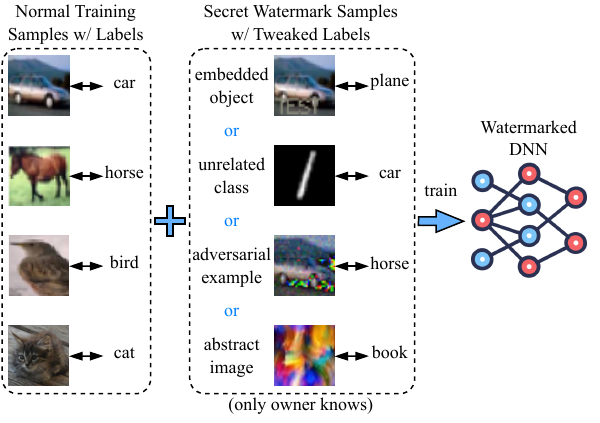}
    \vspace{-0.18in}
    \caption{An example of dynamic deep neural network watermarking scheme~\cite{adi2018turning, zhang2018protecting}.}
    \label{fig:scheme}
    \vspace{-0.18in}
\end{figure}

\subsection{Dynamic DNN Watermarking}

DNN watermarking schemes are essential for protecting intellectual property rights and ensuring the integrity of the model by embedding watermarks into the model behaviors in response to a crafted set of trigger inputs. During watermark verification, model behaviors can be observed to verify the presence of watermarks~\cite{adi2018turning, rouhani2019deepsigns, szyller2021dawn, le2020adversarial, zhang2018protecting, guo2018watermarking, zhang2020model}. 
In Figure~\ref{fig:scheme}, dynamic watermarking methods usually leverage DNN backdoors to generate trigger inputs, which are usually kept secret as they act as keys in the watermark embedding and verification processes.

The fundamental requirements of an effective watermarking technique include fidelity, generality, efficiency, robustness, and security~\cite{li2021survey}. Fidelity ensures that the embedded watermarks do not significantly impact the model accuracy. Generality states that a watermarking technique should be applicable to different DNN architectures and datasets.
The efficiency demands the overhead of watermark embedding and verification should be reasonable.
Robustness demands that embedded watermarks should be accurate even with model post-processing, e.g., fine-tuning~\cite{uchida2017embedding} and network pruning~\cite{zhu2017prune}, which can be performed by an entity with access to internal details of DNN models.
The security requirement ensures that a DNN watermarking scheme can withstand malicious attacks. An adversary may construct a surrogate model from the source model and then try to remove the embedded watermarks, replace the old watermarks with new ones, or fabricate fake watermarks to falsely claim ownership.
A secure dynamic DNN watermarking scheme should be able to defeat all forms of watermark attacks. 

\vspace{-0.05in}
\subsection{{One-Way Key Chain}}

A one-way key chain is constructed using a publicly known function $H$ that is easy to compute, but computationally hard to invert~\cite{lamport1981password}. Typically, the function $H$ is selected as a cryptographic hash function, e.g., MD5~\cite{rivest1992md5}, SHA-1~\cite{rijmen2005update}, or SHA-256~\cite{gilbert2003security}.
A one-way key chain of length $L+1$ is generated by iteratively applying $H$ to an initial key $K$ for $L$ times, resulting in $\{K, H(K), H^2(K), ..., H^L(K)\}$, where $H^i(K)$ denotes the hash value of $H^{(i-1)}(K)$, $2 \leq i \leq L$. We can compute $H^i(K)$ from $H^{(i-1)}(K)$; however, inferring $H^{(i-1)}(K)$ from $H^i(K)$ would be infeasible. The one-way key chain is widely used in authentication~\cite{rivest1992md5} and wireless sensor networks~\cite{tan2008secure}. 

\vspace{-0.04in}
\section{Threat Model}

We focus on protecting the intellectual property rights of DNN models via a secure dynamic watermarking technique. 
We assume that attackers have access to the model APIs to send queries and collect the outputs; thus, they may derive a surrogate model that approximates the watermarked model without knowledge of the secret watermark trigger inputs or target labels. 
A surrogate model with comparable accuracy to the source model effectively grants attackers access to a "white-box" version of the original model, including its parameters. 
Moreover, we assume that attackers have access either to unlabeled data from the same distribution~\cite{chen2019leveraging} or to labeled data from any distribution. 
We further assume there is a trusted third party, which serves as the verifier to ascertain if the claimed watermarks are present in the given model.
With access to the surrogate model, attackers may launch deep learning domain-specific attacks including watermark ambiguity attacks and watermark removal attacks.

\vspace{0.02in}
\noindent{\bf Watermark Ambiguity Attacks.} If an adversary successfully forges and embeds a second watermark into a watermarked model, there will be significant ambiguity with respect to model ownership, leading to false ownership claim. To compromise a watermarked model trained on the original dataset and trigger inputs, attackers can craft a new set of base trigger images with randomly assigned target labels. They then generate fake trigger images by adding trainable noise components, which have the same dimensions as the input images, to the base triggers. The attackers optimize a cross-entropy loss function between the target labels and the predicted labels of the fake trigger images. 
Given that attackers may construct surrogate models via transfer learning, existing DNN watermarking schemes remain vulnerable to watermark ambiguity attacks~\cite{guo2018watermarking, fan2019rethinking, lao2022deepauth, li2019piracy}. 
In particular, backdoor-based watermarking methods embed trigger inputs into the unused space of DNNs while sharing the same classifier for the original task, exhibit inherent limitations in resisting watermark ambiguity attack~\cite{fan2019rethinking}.

\vspace{0.02in}
\noindent{\bf Watermark Removal Attacks.}
A watermark removal attack involves using the source DNN model as input to generate a surrogate model as output. The objective is to eliminate embedded watermarks in the surrogate model while preserving a utility level comparable to that of the original model. In other words, the watermark retention rate in the surrogate model should be sufficiently low to prevent ownership claim with the source watermarks.
There are three types of removal attacks~\cite{lukas2022sok}. 
In input preprocessing attacks, an attacker with white-box access intentionally modifies data samples before passing them to the surrogate model during watermark verification.
In model modification attacks, an adversary with white-box knowledge may alter the internal parameters of the source model, typically through fine-tuning or pruning, to create a surrogate model.
Model fine-tuning is a transfer learning technique that adjusts an already trained model to perform another related task; however, the original model parameters are modified during fine-tuning, thus disrupting the embedded watermarks~\cite{chen2021refit}. 
Model pruning can set weights below a certain threshold to zero while maintaining the required model's accuracy; however, this process impact the watermark presence due to structural changes in the network. 
In model extraction attacks, an adversary trains a surrogate model by collecting input-output pairs from the source model, requiring only black-box access. 
The details of these three attack types are further discussed in Section~\ref{sec:watermark_removal}.

\vspace{-0.05in}
\section{\TN{} Design}

\subsection{System Overview}

Figure~\ref{fig:overview} shows the overall design of \TN{}. Given a DNN model $M_0$, the model owner first embeds watermarks into $M_0$ to obtain the watermarked model $M_w$.
Due to the white-box assumption, attackers can access both the architecture and weights of $M_w$. To remove existing watermarks or embed new watermarks, a pirate can perform watermark removal and watermark ambiguity attacks to deploy a surrogate model $M_s$. To protect intellectual property, the model owner can present watermark keys (i.e., the seed key to generate trigger inputs) to a trusted third party for verification of the claimed watermarks in $M_s$. Then the verifier calculates the watermark retention rate $R$ by evaluating if $m$ out of $L$ triggers are matched. If $R$ exceeds a retention threshold of $(1-\theta)$, the watermark is considered detected and ownership is successfully established; otherwise, the ownership claim fails. The detection threshold ($\theta$) depends on both the surrogate model ($M_s$) and the acceptable success probability for a guessing attack (i.e., $p$-value).

In \TN{}, we design watermarks based on a cryptographic chain, motivated by three hypotheses. 
First, one-way chain input can hinder the backpropagation algorithms used in adversarial machine learning, which underlie watermark ambiguity attacks. For example, watermarks ($x_1$, $x_2$) have labels ($y_1$, $y_2$). Attackers can apply optimization (adversarial ML) to find alternative inputs ($x_1'$, $x_2'$) that produce the same labels to falsely claim ownership. However, by introducing a chaining constraint, i.e., requiring $x_1'$, $x_2'$ satisfy $x_1' = hash(x_2')$, we dramatically increase the searching difficulty since adding a one-way function into optimization constraints is computationally infeasible.
Second, we observe that noise-like/pseudo-random triggers are more effective than other formats. This is because, in feature space, they are far from the distribution of natural data and hence are robust against fine-tuning/retraining. 
Third, the digital signature can be employed to verify that the triggers are truly generated and owned by the original model creator.

\begin{figure}[t]
    \centering
    \includegraphics[width=2.8in]{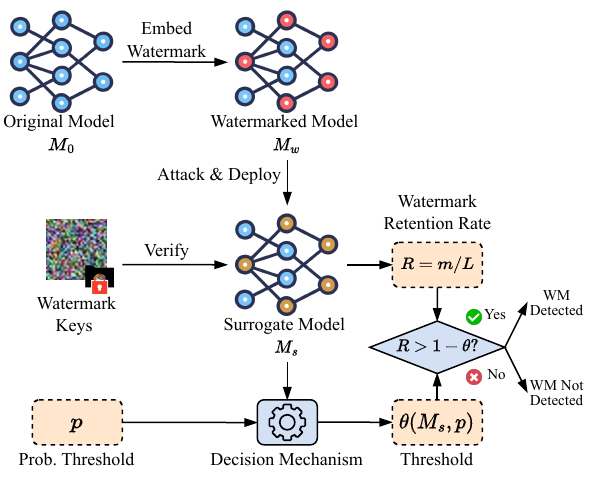}
    \vspace{-0.1in}
    \caption{The overview of our \TN{} schema.} 
    \label{fig:overview}
    \vspace{-0.15in}
\end{figure}

\vspace{-0.03in}
\subsection{Watermark Generation and Embedding}

All known dynamic watermarking schemes are vulnerable to watermark ambiguity attacks, due to the absence of a robust authentication mechanism for the trigger inputs.
To address this issue, we introduce additional cryptographic requirements for both the trigger inputs and the corresponding watermarked model.
Using our cryptographic chain, the feasibility of optimization-based ambiguity attacks is effectively eliminated.
In Figure~\ref{fig:embedding}, \TN{} establishes sequential cryptographic relationships for both the trigger inputs and the target labels. 

\vspace{0.03in}
\noindent {\bf Trigger Inputs.} 
The trigger inputs are generated as a cryptographically chained sequence by applying a cryptographic hash function over a seed key. In this sequence, each trigger inputs is ordered such that each input is derived from its predecessor by repeated application of the same hash function. Only the watermark owner, who possesses the secret key $K$, can generate the trigger inputs $\{B_{i}\}_{i=1}^{L}$, where $B_L = F(K)$ and $B_{i-1} = F(B_{i})$, for $i \in \{2,...,L\}$.

\vspace{0.03in}
\noindent {\bf Target Labels.} 
The target labels, which correspond to the ordered trigger inputs, are derived from the digital signature of the model owner. The digital signature, such as the hash value of the model owner's name, is converted into a number in base of the output dimension. For example, if the watermarked model has 8 classes, the digital signature is transformed into an octal number. Each digit of this number is then designated as the target label for a corresponding trigger input in sequential order. 
Thus, the generated sequence of target labels is denoted as $\{c_{i}\}$, where $i \in \{1,2,...,L\}$.

By applying our cryptographic mechanism, the trigger inputs and their corresponding target labels are cryptographically interrelated with the owner's seed key and digital signature, respectively.
Therefore, an adversary is unable to apply optimization-based adversarial attacks, e.g., watermark ambiguity attacks.

The seed key $K$, serving as a secret key, is randomly selected by the DNN model owner and provided to the hash function to construct a one-way chain of trigger inputs. 
Note that the trigger inputs generated in the chain are of the same size as the DNN inputs, whereas the seed key size is not constrained. 
In Figure~\ref{fig:embedding}, $F$ is an instance of the cryptographic one-way function; the sequence of watermark trigger inputs (or watermark keys) is generated by applying $F$, repeatedly.
Once a chain of size $L+1$ is constructed, the watermark keys can be disclosed and used in reverse order, meaning the last generated key will be the first to be used in the verification process. 
The number of disclosed keys is determined by the model owner based on specific applications.
Due to the one-way property of $F$, the security of the remaining undisclosed keys is still intact, allowing them to be used in subsequent rounds of watermark verification.
Thus, our method introduces cryptographic inter-constraints between the watermark trigger inputs, enabling multi-stage watermark verification using only a single seed key.

\begin{figure}[t]
    \centering
    \includegraphics[width=2.6in]{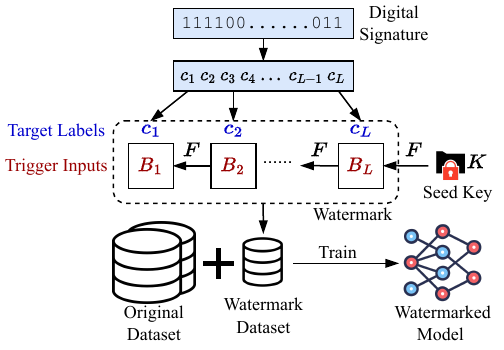}
    \vspace{-0.15in}
    \caption{Embedding a watermark into a DNN model. The model owner generates a chain by selecting a seed key $K$ and repeatedly applying a one-way function $F$ for $L$ iterations. The owner's digital signature is converted into target labels \{$c_i$\}, which are assigned to the triggers \{$B_i$\} in sequential order.}
    \label{fig:embedding}
    \vspace{-0.15in}
\end{figure}

In the watermark embedding process, the model owner’s digital signature is partitioned into target labels. 
Let the digital signature, denoted as $S$, be represented as a binary number with $|S|$ bits.
Since the number of classes supported by a DNN may not be a power of 2, the binary digital signature (in base 2) must first be converted into a number in base $C$, where $C$ represents the number of classes.  
For example, $C=10$ for CIFAR-10 and $C=100$ for CIFAR-100. 
Let $S_C$ denote the digital signature represented in base $C$.
The number of trigger input blocks is given by $L = \lceil \log_{C} S_{C} \rceil$.
Thus, the digital signature $S$ can be viewed as a sequence of $L$ digits in base $C$, e.g., $S = ({c_1} {c_2} {c_3} ... {c_L})_{C}$, where $0 \leq c_i < C$ for $1 \leq i \leq L$. These $c_i$ values are then used as the target labels for the watermark. Compared to directly dividing the binary signature $S$ into segments of length $\lceil \log_2 C \rceil$, our method allows for more trigger input blocks because
$\lceil \log_{C} S_{C} \rceil \geq \lceil \log_{2} S_{2} / \lceil \log_2 C \rceil \rceil$.

The watermark-embedded DNN is built by training a model from scratch on both the original dataset and a watermark dataset.
The watermark dataset consists of the trigger inputs and their corresponding target labels.
In Section~\ref{sec:efficiency}, we demonstrate that with \TN{}, the watermark embedding accuracy can achieve 100\%, without compromising the validation accuracy.
There are two possible implementation approaches to obtain the watermarked model. 
The first approach is to train the model on both the original data and the watermark data.
Another approach is to fine-tune a pre-trained DNN model with the watermark dataset.
We adopt the first method based on a reasonable assumption that the model owner can access both the original dataset and the watermark dataset.

\vspace{-0.03in}
\subsection{Watermark Verification}

Figure~\ref{fig:verification} illustrates the overview of watermark verification.
In the verification procedure, the model owner presents the seed key $K$ and uses it to re-generate the chain of $L$ trigger inputs, i.e., from $B_1$ to $B_L$.
These trigger input blocks are then fed into the DNN model to obtain the corresponding output labels, denoted as $c_1', c_2', ... , c_L'$.
These labels $\{c_i'\}$ can be concatenated to derive the digital signature.
\vspace{-0.08in}
\begin{equation}
    S' := c_1'\ ||\ c_2'\ ||\ ...\ ||\ c_{L-1}'\ ||\ c_{L}',
\end{equation}
\noindent where $c_i' \in \{0,1,...,C-1\}$ represents a number in base-$C$ numeral system. In addition, $c_i'$ is a digit of $S'$ in base $C$. 

Then, the Hamming distance between $S'$ and $S$, denoted as $d(S', S)$, is calculated as a measure of similarity between them. 
Hamming distance is a metric used to compare two data sequences of equal length, indicating the number of positions at which the corresponding symbols differ.
This metric is typically employed in error detection and error correction, particularly in coding theory and data communications over computer networks.

Another parameter in the proposed watermarking scheme is the decision threshold $\theta$ for the Hamming distance.
The verifier determines that the claimed watermark exists in the tested model only if $d(S', S) \leq \theta \cdot L$.
For example, if $\theta = 0.3$, the Hamming distance must be less than or equal to $0.3 \cdot L$ for $S'$ to be considered a match for $S$, indicating at least 70\% of the symbols must be identical.
Section \ref{sec:threshold} explains how to determine the threshold $\theta$ based on a given probability threshold, which corresponds to the success rate of simple watermark guessing attacks. This probability threshold can serve as a representative metric for the security of a model.

\begin{figure}[t]
    \centering
    \includegraphics[width=2.2in]{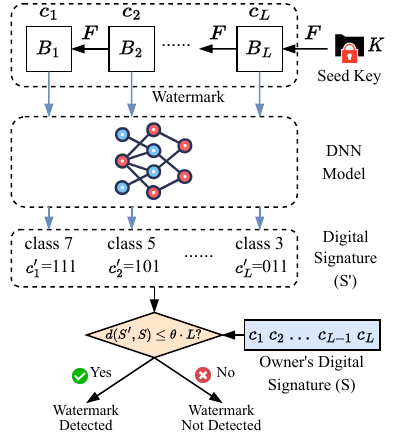}
    \vspace{-0.1in}
    \caption{The overview of watermark verification with a cryptographic side-classifier. $d(S',S)$ stands for the Hamming distance between two binary strings, $S'$ and $S$.}
    \label{fig:verification}
    \vspace{-0.15in}
\end{figure}

\vspace{-0.03in}
\subsection{Watermark Decision Threshold}\label{sec:threshold}

Watermark presence detection with Hamming distance is closely related to estimating the probability distribution of Hamming distances when a simple watermark guessing attack is performed.
In this scenario, an adversary can easily launch a simple guessing attack by providing a random seed key along with target labels derived from their digital signature, thereby attempting to claim the watermark’s presence.
By estimating the probability distribution of Hamming distances in such cases, we can derive the likelihood (or probability) of the watermark's presence given a particular seed key and target labels.
This distribution is influenced by the DNN model's actual classification probabilities for random inputs. To address this, we introduce a new dynamic Monte Carlo method that enables the derivation of tight bounds on the distribution.

\vspace{-0.03in}
\section{Two-Phase Monte Carlo Estimation}

To derive the probability distribution of the number of matching labels under watermark guessing attacks, we first need to obtain the DNN classification probability distribution for random inputs.
Then, the upper bounds of the matching probabilities can be derived by utilizing the classification probability distribution.
However, due to the complexity of obtaining an analytical solution directly from a DNN model, we propose a dynamic Monte Carlo method that employs a two-phase sampling approach to approximate the matching probability distribution.
In this section, we present a quantitative method for determining the detection threshold, addressing the question: given particular watermarks, how many retrieved watermarks are sufficient to assert ownership?

\vspace{-0.03in}
\subsection{Classification Distribution Estimation}

A significant challenge in estimating the classification probability distribution lies in the skewness across different classes. 
To address this, we conduct a simulation experiment and summarize the results in Table~\ref{tab:skewed_pr}.
Our findings reveals that the probability distribution exhibits a large standard deviation.
Specifically, after feeding 10 million random inputs into a Resnet-18 model trained on CIFAR-10, we observed that 5 out of 10 classes have a hit probability of 0.
Similarly, when 10 million random inputs are fed into the same model trained on CIFAR-100, 49 out of 100 classes are never hit.
To derive an approximate classification probability distribution for a DNN model, we define the following terms in this section:
\begin{itemize}
\item $N$: the total number of random inputs created in simulation.
\item $C$: the number of classes (or possible outputs) of the neural network. $C$ = 10 for CIFAR-10 and $C$=100 for CIFAR-100.
\item $\Gamma$: the set of all class indices, {$\{0, 1, 2, ..., C-1\}$}.
\item $n_i$: the number of inputs assigned to the class $i$, {$0 \leq i \leq C-1$}. 
\item $U=\{i_1, i_2, ..., i_k\}$: the set of $k$ class indices within {$[0,C-1]$} that are not hit by any of the $N$ inputs, with $|U|=k$.
\item $p_i$: the classification probability that an input falls into class $i$, calculated as $n_i / N$. 
\item $p_U$: the 0-hit probability, defined as $p_{i_1}+p_{i_2}+...+p_{i_k}$.
\end{itemize}

In the experiment shown in Table~\ref{tab:skewed_pr}, we obtain a set of class indices $U$, comprising 49 out of 100 indices for the DNN model trained over CIFAR-100.
However, the classification probabilities for the classes in $U$ cannot be directly estimated from the experiment data as no input samples are classified into these classes. 

To estimate the probability $p_U$, we devise a new approximation technique by modeling the experiment as a Bernoulli trial with a success probability of $p_U$ and a failure probability of $1-p_U$. 
To estimate the probability $p_U$ of the set $U$, we can calculate the expected number of random inputs required to observe the first successful classification into any class within $U$.
For example, in Table~\ref{tab:skewed_pr}, we define $U$ as the set of 49 class indices that receive no hits among the 10 million random inputs in the first stage.
That means that the number of random inputs $N$ is selected as 10 million to determine the initial sets $U$ and $\Gamma-U$. 
In the second stage, we generate additional random inputs until a class in $U$ is hit for the first time.
We perform 50 simulations and observe that on average 11,531,629 additional inputs are required to achieve this first hit.
Given that the number of trials before the first success in a Bernoulli process follows a geometric distribution, we can obtain $(1-p_U)/p_U=11,531,629$, which yields an estimated probability of $p_U = 1/(11,531,629+1) =8.672\times 10^{-8}$. 

After deriving $p_U$, the previous probabilities (calculated under the assumption of $p_U=0$), i.e., $p_i=n_i/N$ for all $i$ in $\Gamma-U$, need to be adjusted by applying a normalization technique.
Specifically, $p_i = p_i^{old} - p_U \cdot p_i^{old} = (1-p_U) \cdot p_i^{old}$, for all $i$ in $\Gamma-U$.

We also need to determine the approximate values of $p_i$ for all $i$ in $U$. Let $i_j$ denote the $j$-th index in $U$, for $1 \leq j \leq k$. Then, equation $\sum_{j=1}^{k} p_{i_j} = p_{U}$ must hold for the definition of $U$.

\vspace{-0.03in}
\subsection{Matching Probability Bound Estimation}
Our objective is to assess the probability that a randomly generated one-way input chain is erroneously accepted as a valid watermark under a Hamming distance threshold $\theta$.
A valid watermark chain should have at least $\lceil L \cdot (1-\theta) \rceil$ preserved blocks, where the labels produced by the DNN model match the digital signature of the legitimate owner of the model. 
The number of distinct combinations for selecting {$\lceil L \cdot (1-\theta) \rceil$} preserved blocks out of L blocks is
\begin{equation}
\left(
\begin{aligned}
\ L \ \ \\
\lceil L \cdot (1-\theta) \rceil
\end{aligned}
\right) = 
\left(
\begin{aligned}
\ L \ \ \\
\lfloor L \cdot \theta \rfloor
\end{aligned}
\right).
\end{equation}
\indent Given a DNN model and a sequence of $L$ target labels corresponding to the model owner’s digital signature, we aim to determine the success probability of a random guessing attack. 
Let $c_1, c_2, ..., c_L$ {($c_i \in [0,C-1]$ for $1 \leq i \leq L$)} denote the target classes (or labels) that match the owner's digital signature. 
A successful ownership claim is defined as achieving at least $m = \lceil L \cdot (1-\theta) \rceil$ matching labels between the attack inputs and the model owner's digital signature.
The process can be modeled as counting the successful matches in a sequence of $L$ independent yes/no experiments with the success probabilities of $p_{c_1}, p_{c_2}, ..., p_{c_L}$, where $c_i \in [0, C-1]$.
Poisson binomial distribution is well suited to calculate the above probabilities~\cite{hong2013computing}. 
Let $M$ be a random variable that indicates the number of matches. The probability of obtaining exactly $m$ matches out of $L$ chained random inputs can be expressed as
\begin{equation}
    Pr(M=m) = \sum_{A \in F_m} \prod_{i \in A} p_{c_i} \prod_{j \in A^{c}} (1-p_{c_j}),
    \label{eq:PrM}
\end{equation}
where $F_m$ is the set of all subsets of $m$ integers that can be selected from $\{1, 2, ..., L\}$. For example, if $L=3$ and $m=2$, then $F_2 = \{\{1,2\}, \{1,3\}, \{2,3\}\}$. 
For any subset $A \in F_m$, $A^c$ is the complement of $A$. For example, if $A = \{1,3\} \in F_{2}$, then $A^{c} = \{2\}$. 

Obviously, $F_m$ will contain $L!/((L-m)! \cdot m!)$ elements; hence, the Equation (\ref{eq:PrM})  is infeasible to compute in practice unless $L$ is small.
However, the following formula has been derived to approximate $Pr(M \geq m)$ using simple calculations.

\begin{table}[t]
    \centering
    \resizebox{0.95\linewidth}{!}{
    \begin{tabular}{c|c|c|c|c|c}
    \toprule
    \multirow{2}{*}{Dataset} & \multirow{2}{*}{\shortstack{Avg.\\Prob.}} & \multirow{2}{*}{\shortstack{Min\\Prob.}} & \multirow{2}{*}{\shortstack{Max\\Prob.}} & \multirow{2}{*}{\shortstack{Prob.\\Stdev}} & \multirow{2}{*}{\shortstack{\# of classes\\never hit}} \\
    {} & {} & {} & {} & {} & {} \\
    \midrule
    {CIFAR-10} & {0.1} & {0} & {0.9962} & {0.2987} & {5} \\
    {CIFAR-100} & {0.01} & {0} & {0.9433} & {0.0399} & {49} \\
    \bottomrule    
    \end{tabular}
    }
    \vspace{0.02in}
    \caption{Skewed probability distribution across different classes for DNN models trained on CIFAR-10/CIFAR-100.}
    \label{tab:skewed_pr}
    \vspace{-0.3in}
\end{table}

\begin{claim}
The probability of obtaining at least $m$ matches out of $L$ candidates is 
\begin{equation}
    {Pr}(M \geq m) \approx \Phi(\frac{L+0.5-\mu}{\sigma'}) - \Phi(\frac{m-0.5-\mu}{\sigma'}),
\end{equation}
\noindent where $\Phi$ is the cumulative distribution function (CDF) of the standard normal distribution, where the mean $\mu=L/C$ and the standard deviation $\sigma'$ follows the equation:
\begin{equation}
    \sigma' = \sqrt{\sum_{i=1}^{L} p_{c_i} - \sum_{i \in (\Gamma-U), 1 \leq i \leq L} p_{c_i}^2 - ({p_{U}^2}/{k})}.
\end{equation}
\end{claim}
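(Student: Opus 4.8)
\emph{Proof idea.} The plan is to recognize $M$ as a Poisson binomial variable, approximate its law by a Gaussian via a refined central limit theorem, and then replace the exact mean and variance by the estimates produced in the classification-distribution phase. Concretely, for a random one-way chain supplied in a guessing attack, each candidate block $B_i$ behaves like a fresh random input (modeling the cryptographic hash as a random oracle, the $B_i$ are effectively independent), so the event ``$M_s$ maps $B_i$ to its target class $c_i$'' occurs independently with probability $p_{c_i}$. Hence $M=\sum_{i=1}^{L}X_i$ with $X_i\sim\mathrm{Bernoulli}(p_{c_i})$ mutually independent, which is exactly the distribution of Equation~(\ref{eq:PrM}). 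Its exact mean and variance are $\mu_\star=\sum_{i=1}^{L}p_{c_i}$ and $\sigma_\star^2=\sum_{i=1}^{L}p_{c_i}(1-p_{c_i})=\sum_{i=1}^{L}p_{c_i}-\sum_{i=1}^{L}p_{c_i}^{2}$. Since the $c_i$ are the base-$C$ digits of a digital signature (a hash output), I would treat them as uniform over $\Gamma$, so $\mathbf{E}[p_{c_i}]=\tfrac1C\sum_{j\in\Gamma}p_j=\tfrac1C$ and $\mu_\star$ concentrates around $\mu=L/C$, the value used in the claim.

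For the variance I would split the quadratic term by whether the target class was observed in the first phase: $\sum_{i=1}^{L}p_{c_i}^{2}=\sum_{1\le i\le L,\,c_i\in\Gamma-U}p_{c_i}^{2}+\sum_{1\le i\le L,\,c_i\in U}p_{c_i}^{2}$. The first group uses the renormalized empirical estimates $p_{c_i}=(1-p_U)\,n_{c_i}/N$. For the second group only the aggregate $\sum_{j=1}^{k}p_{i_j}=p_U$ is known, so I would impose the uniform (maximum-entropy) choice $p_{i_j}=p_U/k$ over the $k$ never-hit classes, so the $U$-contribution is taken to be $\sum_{j=1}^{k}p_{i_j}^{2}=k\,(p_U/k)^2=p_U^2/k$. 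Substituting both pieces gives $\sigma'^{2}=\sum_{i=1}^{L}p_{c_i}-\sum_{1\le i\le L,\,c_i\in\Gamma-U}p_{c_i}^{2}-p_U^2/k$, i.e.\ the stated $\sigma'$. Finally, a successful ownership claim means $m\le M\le L$, so the normal approximation with the standard continuity correction yields $Pr(M\ge m)=Pr(m\le M\le L)\approx\Phi\!\big(\tfrac{L+0.5-\mu}{\sigma'}\big)-\Phi\!\big(\tfrac{m-0.5-\mu}{\sigma'}\big)$, which is the claim.

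The rigorous core is the Gaussian approximation: I would justify it through the Lindeberg--Feller CLT for triangular arrays (the increments satisfy $|X_i-p_{c_i}|\le1$, so Lyapunov's condition holds whenever $\sigma'^{2}$ grows), or quantitatively via a Berry--Esseen bound for Poisson binomial sums, which controls the error by $O(1/\sigma')$. The main obstacle --- and the place where the statement is genuinely an estimate rather than an identity --- is the regime that actually matters for security: when almost all $p_{c_i}$ are tiny and $p_U$ is minuscule, $M$ is strongly skewed, $\sigma'$ is small, and the Gaussian tail is only a coarse surrogate for the true $Pr(M\ge m)$. I would therefore present the result as an approximation, flag the two heuristic substitutions ($\mu\approx L/C$ from uniformity of signature digits, and $p_{i_j}\approx p_U/k$ from spreading the 0-hit mass uniformly over $U$), and, where a hard guarantee is needed, complement it with the Berry--Esseen error term or a one-sided Chernoff/Hoeffding bound on $Pr(M\ge m)$ as a sanity check.
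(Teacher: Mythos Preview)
Your derivation is correct and reaches the same formula as the paper, but the two routes diverge on one conceptual point worth noting. You introduce the allocation $p_{i_j}=p_U/k$ as a \emph{maximum-entropy heuristic} --- one of the ``two heuristic substitutions'' you flag at the end --- whereas the paper derives it from the constrained optimization $\min\sum_{i\in U}p_{c_i}^{2}$ subject to $\sum_{i\in U}p_{c_i}=p_U$, whose minimizer is exactly the uniform split. That framing matters because the paper's aim is an \emph{upper bound} on the attacker's success probability: minimizing the quadratic term makes $\sigma'$ a tight upper bound on the true $\sigma$, and since $\varphi(x)$ is decreasing for $x>0$ (the paper checks $\mu\approx L/C\ll L\cdot\theta$ so the relevant standardized arguments are positive), inflating $\sigma$ to $\sigma'$ can only \emph{increase} the Gaussian tail. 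Thus the paper's approach buys a one-sided, security-conservative guarantee rather than a mere point estimate. Your approach, by contrast, is more careful on the analytic side --- explicitly invoking Lindeberg--Feller or Berry--Esseen and flagging the skewed small-$\sigma'$ regime --- which the paper handles only by citation. Both are valid justifications of the claim as stated; incorporating the paper's optimization argument would strengthen yours by making clear that $\sigma'$ is not just a plausible plug-in but the worst case for the defender.
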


\begin{proof}

An approximation technique~\cite{hong2013computing} can be employed to obtain the estimated probability. 
Approximation methods are still widely used due to their computational efficiency, especially when $L$ is large. 
We will utilize the normal approximation method, which is based on the central limit theorem (CLT).
If we define
\begin{equation}
    \mu = \sum_{i=1}^{L} p_{c_i},\ \sigma = \sqrt{\sum_{i=1}^{L} p_{c_i} (1-p_{c_i})}.
\end{equation}
Then, we can approximate the probability mass function using the normal approximation method with small errors for reasonably large values of $L (L \geq 10)$~\cite{choi2002approximating, volkova1996refinement}.
\begin{equation}
    Pr(M = m) \approx \varphi(\frac{m + 0.5 - \mu}{\sigma}),
\end{equation}
\noindent where $M$ denotes the number of matching labels (i.e., class indices) to the digital signature, and $\varphi$ stands for the \emph{probability distribution function (PDF)} of the standard normal distribution.

For example, with the threshold $\theta=0.3$, $L=100$ inputs, and $C=100$ classes, the attack success probability, i.e., the probability of a randomly generated chain of length $L$ yielding $m = \lceil L \cdot (1-\theta) \rceil = 70$ or more matches, would be
\begin{equation}
    \begin{aligned}
    P_{r}(M \geq 70) &= 1 - Pr(M < 70) \\
    &= 1 - \sum_{i=0}^{69} \varphi (\frac{i+0.5-\mu}{\sigma}) \\
    &= \sum_{i=70}^{L} \varphi(\frac{i+0.5-\mu}{\sigma}).
    \end{aligned}
    \label{eq:PMgeq30}
\end{equation}

Instead of using the \emph{PDF} of $\varphi(x)$, $P_r(M \geq m)$ may be obtained from $\Phi(x)$, which is the \emph{cumulative distribution function (CDF)} of the standard normal distribution.
\begin{equation}
    P_r(M \geq m) = \Phi(\frac{L+0.5-\mu}{\sigma}) - \Phi(\frac{m-0.5-\mu}{\sigma}).
\end{equation}

In our problem setting, we aim to approximate the probabilities $p_j$ for $j$ in $U$ to obtain the values of $\mu$ and $\sigma$, and then apply the above approximation formula.
Our goal is to find the upper bound on the attack success probability, whose example is shown in Equation (\ref{eq:PMgeq30}). 
The following formula holds for reasonable values of $\theta$:
\begin{equation}
    \mu \approx \frac{L}{C} \ll L \cdot \theta
\end{equation}

From the last formula in Equation (\ref{eq:PMgeq30}), we observe that an upper bound for the attack success probability $P_r'$ can be derived by finding a tight upper-bound $\sigma'$ for $\sigma$.
This is feasible because $\varphi(x)$ is a decreasing function for $x>0$.

The value of $\sigma$ can be rewritten as follows, noting that $|U|=k$.
\begin{equation}
    \begin{aligned}
        \sigma &= \sqrt{\sum_{i=1}^{L}(p_{c_i} - p_{c_i}^2)} = \sqrt{\sum_{i=1}^{L} p_{c_i} - \sum_{i=1}^{L} p_{c_i}^2} \\
        & = \sqrt{\sum_{i=1}^{L} p_{c_i} - \sum_{i \in (\Gamma - U), 1 \leq i \leq L} p_{c_i}^2 - \sum_{i \in U, 1 \leq i \leq L} p_{c_i}^2} \\ 
        & \leq \sqrt{\sum_{i=1}^{L} p_{c_i} - \sum_{i \in (\Gamma - U), 1 \leq i \leq L} p_{c_i}^2 - \sum_{i \in U, 1 \leq i \leq L} (\frac{p_U}{k})^2} \\
        & = \sqrt{\sum_{i=1}^{L} p_{c_i} - \sum_{i \in (\Gamma - U), 1 \leq i \leq L} p_{c_i}^2 - \frac{p_U^2}{k}} = \sigma'.
    \end{aligned}
    \label{eq:mu}
\end{equation}

Hence, the upper bound, $\sigma'$, of $\sigma$ is given in Equation (\ref{eq:mu}). 
The inequality in Equation (\ref{eq:mu}) holds due to the following optimization.
\begin{equation}
    \begin{aligned}
        &min \sum_{i \in U, 1 \leq i \leq L} p_{c_i}^2,\\
        &s.t.,\begin{aligned}
        &\sum_{i \in U, 1 \leq i \leq L} p_{c_i} = p_U,\\
        &0 \leq p_i \leq 1, 1 \leq i \leq L.
        \end{aligned}
    \end{aligned}
    \label{eq:min}
\end{equation}

In Equation (\ref{eq:min}), the item can achieve the minimum value \emph{only if} every $p_{c_i} = p_U/k$, where $i \in U$, $1 \leq i \leq L$.
\end{proof}
For instance, the margin of error (MOE) of the approximation formula in Claim 1 is calculated to be less than 1.2\% when compared to the precise values derived from Equation (\ref{eq:PrM}), for relatively small values (i.e., $L \leq 20$) using ResNet-18 models trained on CIFAR-10 and CIFAR-100 datasets.

\subsection{Threshold Decision}

The probability distribution in Table~\ref{tab:atk_suc_prb} shows the number of matches obtained by a basic watermark guessing attack against a ResNet-18 model trained on the {CIFAR-10} dataset. For the data in Table~\ref{tab:atk_suc_prb}, the probability distribution of regular data would not change much even after watermark embedding, due to the small proportion of watermark data (as Table~\ref{tab:acc} shows the accuracy only drops slightly).

\begin{table}[t]
    \centering
    \resizebox{\linewidth}{!}{
    \begin{tabular}{c|p{0.35in}<{\centering}|p{0.45in}<{\centering}|p{0.5in}<{\centering}|p{0.4in}<{\centering}|p{0.35in}<{\centering}}
    \toprule
    {$m$} & {0-6} & {7-8} & {9} & {10} & {11} \\
    \midrule
    {$Pr(M \geq m)$} & {1.0} & {0.9999} & {0.9984} & {0.8382} & {0.1618} \\
    \midrule
    \midrule
    {$m$} & {12} & {13} & {14} & {15-100} \\
    \midrule
    {$Pr(M \geq m)$} & {0.0015} & {4.01e-7} & {2.45e-12} & {0.0} \\
    \bottomrule
    \end{tabular}
    }
    \vspace{0.05in}
    \caption{The success probabilities over different match numbers for watermark guessing attacks against the ResNet-18 model trained on {CIFAR-10} ($C=10, L=100$).}
    \label{tab:atk_suc_prb}
    \vspace{-0.2in}
\end{table}

With the probability distribution, we set a threshold of the success probability for a simple watermark guessing attack.
Then, the success probability threshold can be mapped to the min match number, whereas we derive the decision threshold $\theta$.
For example, if the success probability threshold is set to $10^{-7}$ (i.e., the compromise probability should be less than $10^{-7}$), the number of matches $m$ should be at least 14 based on the probability distribution in Table~\ref{tab:atk_suc_prb}.
Then, we can derive the decision threshold of Hamming distance as $\theta = 1 - (m/L) = 0.86$.
Thus, when the chain length is set to $L=100$, the max tolerance error rate for a match achieves 86\%, i.e., a 14\% match in trigger inputs is sufficient for ownership claim. In practice, the match ratio is typically higher, e.g., a 90\% match can undoubtedly establish model ownership.
Therefore, the decision threshold for determining a watermark presence is dependent on both the watermarked model, $M_W$, and the success probability threshold, $p$. This threshold can be expressed as a function of $\theta(M_W, p)$.

Our proposed two-phase estimation method is more precise especially when the output probability distributions are skewed (i.e., small $p$-value), where the traditional one-phase estimation methods cannot work. It is because, for traditional estimation methods, the CDF function cannot accumulate quickly to reach the target probability sum $(1-p)$ due to the requirement of a large number of empirical estimations. Therefore, the state-of-the-art estimation method [32] only uses a moderate $p$-value with low confidence. Our method is applicable to use smaller $p$-values, which correspond to higher marginal utility and higher level of security (Section-\ref{sec:margin}).

\section{Experiments}

All watermarking schemes and removal attacks are implemented in PyTorch, on a server equipped with an NVIDIA GTX 1080 GPU.

\subsection{Datasets and DNN Models.}

Two image classification datasets, CIFAR-10 and CIFAR-100~\cite{Cifar}, are used to build watermarked DNN models. 
In our experiments, we use two model types, i.e., ResNet-18~\cite{he2016deep} and ResNet 28x10~\cite{zagoruyko2016wide}, which are trained on CIFAR-10 and CIFAR-100 datasets.
Model extraction attacks generally require extensive data access; thus, we assume adversaries have access to the full training dataset and know the architecture of the source model.

\vspace{-0.03in}
\subsection{Other DNN Watermarking Schemes}
We compare our scheme against four black-box watermarking methods~\cite{adi2018turning,zhang2018protecting}, summarized in Table~\ref{tab:bb_wm}.
In Adi et al.'s approach~\cite{adi2018turning}, watermark trigger inputs are abstract images paired with randomly assigned labels from the full class space. They explore two embedding strategies: training a model from scratch on a combined dataset (original + watermark data) and fine-tuning a pre-trained model. Their results show that training from scratch offers greater robustness to model modification attacks. Accordingly, we adopt this setting and reproduce their models using the combined dataset.

\begin{table}[t]
    \centering
    \renewcommand{\arraystretch}{1.05}
    \resizebox{0.975\linewidth}{!}{
    \begin{tabular}{c|c|c|c}
    \toprule
    {\bf Scheme} & {\bf Category} & {\bf Verification} & {\bf Capacity} \\
    \midrule
    {\TN{}} & {model dependent/independent} & {black-box} & {multi-bit} \\
    {Adi} & {model dependent/independent} & {black-box} & {multi-bit} \\
    {{Content}} & {model independent} & {black-box} & {zero-bit} \\
    {{Noise}} & {model independent} & {black-box} & {zero-bit} \\
    {{Unrelated}} & {model independent} & {black-box} & {zero-bit} \\
    \bottomrule
    \end{tabular}
    }
    \caption{Black-box watermarking schemes in evaluation.\label{tab:bb_wm}}
    \vspace{-0.3in}
\end{table}

Three other watermarking methods are proposed based on different types of trigger images: \emph{Content}, \emph{Noise}, and \emph{Unrelated} images~\cite{zhang2018protecting}. 
In the \emph{content}-based approach, trigger inputs are randomly chosen from a single class and modified with a fixed secret mask, such as a white square over a specific image region. 
The \emph{noise}-based method uses a mask generated from Gaussian noise, while the \emph{unrelated}-image approach selects trigger inputs from a domain unrelated to that of the original DNN. The procedures for generating target labels, embedding watermarks, and verifying ownership are similar to those of the \emph{Adi} scheme.

\vspace{-0.03in}
\subsection{Watermark Removal Attacks}
\label{sec:watermark_removal}
To compare the security and robustness of~\TN{} with four other schemes, we evaluate them against three categories of watermark removal attacks~\cite{lukas2022sok}: \emph{input preprocessing}, \emph{model modification}, and \emph{model extraction}, as summarized in Table~\ref{tab:WM_removal_atk}.

{\em 1) Watermark Removal via Input Preprocessing.}
These attacks remove watermarks without retraining the model by modifying the input images. For example, we can perform adaptive denoising~\cite{buades2005non} (or add Gaussian noise~\cite{zantedeschi2017efficient}) to the entire image.
The JPEG compression attack~\cite{dziugaite2016study} reduces image quality using JPEG encoding, which can eliminate watermarks.
In the input quantization attack~\cite{lin2019defensive}, pixel values are mapped to $2b$ evenly spaced intervals and replaced with the mean value of their interval. 
The input smoothing attack~\cite{xu2017feature} applies a mean, median, or Gaussian filter, resulting in a blurred image that may suppress watermark features.

{\em 2) Watermark Removal via Model Modification.}
These attacks alter the model itself to remove embedded watermarks.
Adversarial training~\cite{madry2017towards} improves model robustness by injecting adversarial examples, which are generated via Projected Gradient Descent~\cite{madry2017towards}, into the training set and fine-tuning the model on them using ground-truth labels. 
The fine-tuning attacks refer to a set of model stealing attacks that apply a transformation to the model by fine-tuning~\cite{uchida2017embedding}.
Fine-tuning attacks~\cite{uchida2017embedding} modify the model by fine-tuning to induce parameter changes. Four variants are considered: \emph{Fine-Tune All Layers (FTAL)} and \emph{Fine-Tune Last Layer (FTLL)} (with other layers frozen), both using ground-truth labels; and \emph{Retrain All Layers (RTAL)} and \emph{Retrain Last Layer (RTLL)}, which reinitialize either all layers or just the last layer and fine-tune using the model’s predicted labels.
Weight pruning~\cite{zhu2017prune} removes a random subset of weights from the model until a target sparsity level $\rho$ is reached.
Weight quantization~\cite{hubara2017quantized}, unlike input quantization, reduces the precision of model weights instead of input images.
Regularization attack~\cite{shafieinejad1906robustness} involves two phases: first, applying strong regularization to shift the model to a new set of parameters (potentially far from the original), which lowers test accuracy; second, recovering accuracy through fine-tuning.

\begin{table}[t]
    \centering
    \renewcommand{\arraystretch}{1.15}
    \resizebox{\linewidth}{!}{
    \begin{tabular}{c|c|c|c}
    \toprule
    {\bf Attack} & {\bf Category} & {\bf Param. Access} & {\bf Data Access} \\
    \midrule
    {Adaptive Denoising} & \multirow{4}{*}{\shortstack{Input\\Preprocessing}} & \multirow{9}{*}{White-box} & \multirow{3}{*}{None} \\
    {JPEG Compression} & {} & {} & {} \\
    {Input Quantization} & {} & {} & {} \\
    {Input Smoothing} & {} & {} & {} \\
    \cline{1-2}\cline{4-4}
    {Adversarial Training} & \multirow{6}{*}{\shortstack{Model\\Modification}} & {} & \multirow{5}{*}{Domain} \\ 
    {Fine-Tuning (RTLL, RTAL)} & {} & {} & {} \\
    {Weight Quantization} & {} & {} & {} \\
    {Weight Pruning} & {} & {} & {} \\
    {Regularization} & {} & {} & {} \\
    \cline{4-4}
    {Fine-Tuning (FTLL, FTAL)} & {} & {} & {Labeled Subset} \\
    \cline{1-4}
    {Transfer Learning} & \multirow{4}{*}{\shortstack{Model\\Extraction}} & \multirow{4}{*}{Black-box} & \multirow{4}{*}{Domain} \\
    {Retraining} & {} & {} & {} \\
    {Cross-Architecture Retraining} & {} & {} & {} \\
    {Adversarial Training (From Scratch)} & {} & {} & {} \\
    \bottomrule
    \end{tabular}
    }
    \caption{Watermark removal attacks in our evaluation.\label{tab:WM_removal_atk}}
    \vspace{-0.35in}
\end{table}

{\em 3) Watermark Removal via Model Extraction.}
Model extraction can remove watermarks by training a surrogate model to replicate the source model's behavior while discarding embedded watermarks~\cite{lv2024mea}.
In the retraining attack~\cite{tramer2016stealing}, a surrogate model is trained from scratch using input-label pairs obtained via API access.
The cross-architecture retraining variant uses a different model architecture to reduce watermark transfer. 
The transfer learning attack~\cite{torrey2010transfer} initializes the surrogate from a pre-trained model in another domain. 
In adversarial training (from scratch)~\cite{madry2017towards}, the surrogate is trained from scratch using adversarial examples, similar to standard adversarial training.

\vspace{-0.03in}
\section{Performance Analysis}

\subsection{Efficiency of Watermark Embedding}
\label{sec:efficiency}

We conduct 20 independent experiments to evaluate the test accuracy (i.e., accuracy on the original test dataset) of source models before and after watermark embedding, as well as the watermark accuracy (i.e., accuracy on watermark dataset) before and after watermark removal attacks.

Table~\ref{tab:acc} presents the average test/watermark accuracies under watermark ambiguity and 16 watermark removal attacks. 
After embedding the watermark using \TN{}, the watermark accuracy reaches 100\%, while the test accuracy experiences only a minor drop--0.8\% for models trained on CIFAR-10 (92.3\%$\rightarrow$91.5\%) and CIFAR-100 (69.1\%$\rightarrow$68.3\%).
This indicates that the impact of watermark embedding on model utility is negligible, typically under 1\%.
Table~\ref{tab:acc} further demonstrates that \TN{} achieves higher overall robustness than other methods while maintaining embedding efficiency comparable to \emph{Adi}. However, \TN{} is more secure and robust than \emph{Adi}. 
First, \emph{Adi} relies on multiple independent backdoor samples and is therefore susceptible to watermark ambiguity attacks when attackers generate adversarial alternatives for each backdoor trigger. In contrast, \TN{} can resist such ambiguity attacks via cryptographic chain (see Table~\ref{tab:robust}). Second, for the verification with small $p$-value, \emph{Adi} does not provide effective support, whereas \TN{} remains robust (see Section~\ref{sec:pvalues}).

After watermark removal/ambiguity attacks, the watermark accuracy decreases from 100\% to 67\% (34\%) on CIFAR-10 (CIFAR-100); however, the number of remaining valid watermarks is sufficient for ownership verification, based on the quantitative analysis in Section~\ref{sec:threshold}.
More analysis of watermark robustness under various attacks is provided in Section~\ref{sec:robust}.
In addition, a notable decline in test accuracy is observed when attackers perform watermark removal/ambiguity attacks. 
This indicates that adversaries cannot effectively reduce watermark accuracy in the surrogate model without significantly compromising its utility on the original task.
Besides, traditional watermarking methods improve embedding performance by injecting more watermark samples. 
However, \TN{} employs a hash function to ensure each watermark sample is cryptographically independent. Therefore,  adding more trigger samples (increasing chain length) does not lead to greater model memorization of the watermark.

\begin{table}[t]
    \centering
    \renewcommand{\arraystretch}{1.1}
    \resizebox{\linewidth}{!}{
    \begin{tabular}{c|c|c|c|c|c}
    \toprule
    \multirow{2}{*}{\bf Accuracy} & \multicolumn{5}{c}{\bf Accuracies (CIFAR-10/CIFAR-100)} \\
    \cmidrule{2-6}
    {} & {\bf \TN{}} & {\bf Adi} & {\bf Content} & {\bf Noise} & {\bf Unrelated} \\
    \midrule
    \multirow{2}{*}{\shortstack{Test Accuracy\\w/o WM embedding}} & \multirow{2}{*}{0.923/0.691} & \multirow{2}{*}{0.921/0.692} & \multirow{2}{*}{0.915/0.684} & \multirow{2}{*}{0.913/0.685} & \multirow{2}{*}{0.914/0.682} \\
    {} & {} & {} & {} & {} & {} \\
    \cmidrule{1-1}
    \multirow{2}{*}{\shortstack{Test Accuracy\\w/ WM embedding}} & \multirow{2}{*}{0.915/0.683} & \multirow{2}{*}{0.916/0.685} & \multirow{2}{*}{0.91/0.681} & \multirow{2}{*}{0.911/0.678} & \multirow{2}{*}{0.909/0.676} \\
    {} & {} & {} & {} & {} & {} \\
    \cmidrule{1-1}
    \multirow{2}{*}{\shortstack{Test Accuracy\\after Attack}} & \multirow{2}{*}{0.78/0.68} & \multirow{2}{*}{0.77/0.69} & \multirow{2}{*}{0.56/0.52} & \multirow{2}{*}{0.81/0.73} & \multirow{2}{*}{0.53/0.51} \\
    {} & {} & {} & {} & {} & {} \\
    \cmidrule{1-6}
    \multirow{2}{*}{\shortstack{WM Accuracy\\after Embedding}} & \multirow{2}{*}{1.0/1.0} & \multirow{2}{*}{1.0/1.0} & \multirow{2}{*}{1.0/1.0} & \multirow{2}{*}{1.0/1.0} & \multirow{2}{*}{1.0/1.0} \\
    {} & {} & {} & {} & {} & {} \\
    \cmidrule{1-1}
    \multirow{2}{*}{\shortstack{WM Accuracy\\after Attack}} & \multirow{2}{*}{0.67/0.34} & \multirow{2}{*}{0.69/0.37} & \multirow{2}{*}{0.58/0.33} & \multirow{2}{*}{0.73/0.41} & \multirow{2}{*}{0.64/0.35} \\
    {} & {} & {} & {} & {} & {} \\
    \bottomrule
    \end{tabular}
    }
    \caption{Test and watermark (WM) accuracy before/after watermark embedding and after watermark attacks.\label{tab:acc}}
    \vspace{-0.3in}
\end{table}

\vspace{-0.03in}
\subsection{Effects of \emph{p}-values}
\label{sec:pvalues}

To investigate the relationships between the required watermark accuracy ($1-\theta$) and the threshold probability $p$, we apply a two-phase Monte Carlo estimation method to analyze the \TN{} scheme.
For comparison, we adopt the empirical estimation approach proposed in~\cite{lukas2022sok} to evaluate four existing watermarking schemes.
For a range of threshold probabilities (i.e., $p$-values), we conduct 20 independent experiments to determine the average watermark accuracy necessary to successfully verify ownership.
The results are presented in Figure~\ref{fig:th_cifar} for both CIFAR-10 and CIFAR-100.

% \begin{figure}[t]
%     \centering
%     \includegraphics[width=3.2in]{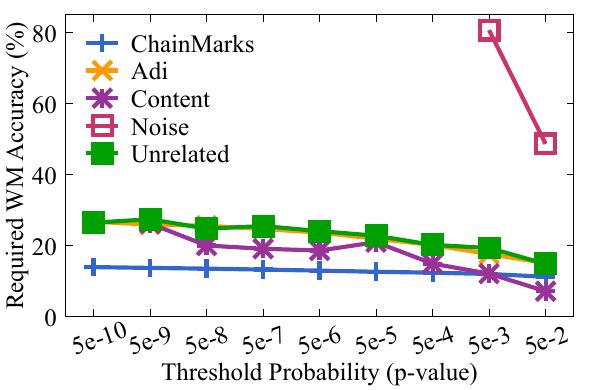}
%     \caption{Required watermark accuracy ($1-\theta$) vs. threshold probability $p$ on CIFAR-10.}
%     \label{fig:th_cifar10}
%     \vspace{-0.15in}
% \end{figure}

% \begin{figure}[h]
%     \centering
%     \includegraphics[width=2.8in]{pics/threshold_cifar10.pdf}
%     \vspace{-0.1in}
%     \caption{Watermark decision thresholds vs threshold probabilities with CIFAR-10. The empirical estimation method~\cite{lukas2022sok} failed calculating decision thresholds for some p-values in 2 WM schemes - Noise and Cotent.}
%     \label{fig:th_cifar10}
%     \vspace{-0.1in}
% \end{figure}
% \begin{figure}[h]
%     \centering
%     \includegraphics[width=2.8in]{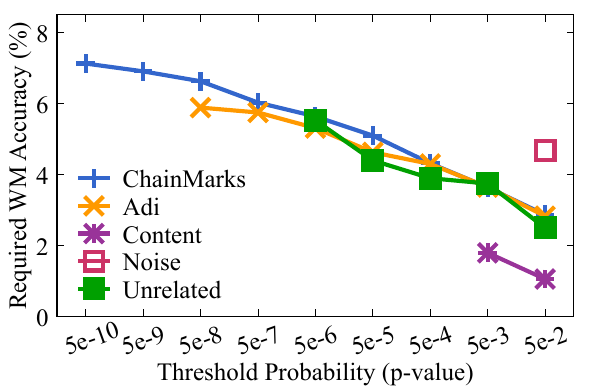}
%     \vspace{-0.1in}
%     \caption{Watermark decision thresholds vs threshold probabilities with CIFAR-100. The empirical estimation method~\cite{lukas2022sok} failed calculating decision thresholds for some p-values in 4 WM schemes - Adi, Cotent, Noise, and Unrelated.}
%     \label{fig:th_cifar100}
%     \vspace{-0.1in}
% \end{figure}

% One observation is that, due to the lack of enough number of models incorporated into the empirical decision threshold estimation method~\cite{lukas2022sok}, for small p-values erroneous conditions are encountered where the CDF didn’t accumulate to the target probability sum. 

As shown in Figure~\ref{fig:th_cifar}, one notable observation is that, for small $p$-values, certain schemes encounter erroneous conditions where the cumulative distribution function (CDF) fails to reach the target probability mass.
This limitation arises from the insufficient number of models used in the empirical estimation of decision thresholds~\cite{lukas2022sok}, suggesting that the empirical method may not be reliable for the settings of small $p$-values.
Such issues are observed in the \emph{Noise}- and \emph{Content}-based schemes on CIFAR-10, and in all schemes except \TN{} on CIFAR-100.
Also, the \emph{Noise}-based scheme consistently exhibits high watermark accuracy requirements and may become impractical when the probability threshold $p$ is set below 0.01.

In Figure~\ref{fig:th_cifar}(a), for models trained on CIFAR-10, \TN{} exhibits the lowest required watermark accuracy among all evaluated schemes.
This is attributed to \TN{}’s high level of security, which enables a greater tolerance for errors, i.e., a higher allowable Hamming distance threshold $\theta$.
In Figure~\ref{fig:th_cifar}(b), for models trained on CIFAR-100, the required watermark accuracy of \TN{} is comparable to that of the \emph{Adi} and \emph{Unrelated} image-based schemes, while decision thresholds cannot be computed for the \emph{Noise}- and \emph{Content}-based schemes in most cases.
Besides, as the $p$-value increases, the required watermark accuracy decreases, since a higher $p$-value indicates a higher probability of compromise and hence requires fewer matched watermarks to verify ownership. 
Thus, \TN{} is able to meet higher security requirements.
% That is because \TN{} utilizes the one-way key chain in the watermark design, which makes the 

\begin{figure}[t]


    %\vspace{-0.1in}
    \centering
    \subfloat[models on CIFAR-10.]{\includegraphics[width=0.5\linewidth]{pics/threshold_cifar10.pdf}}
    % \hfil  
    \subfloat[models on CIFAR-100.]{\includegraphics[width=0.5\linewidth]{pics/threshold_cifar100.pdf}}
    \vspace{-0.15in}
    \caption{Required watermark accuracy ($1-\theta$) vs. threshold probability $p$, for different watermarking schemes.}
    \label{fig:th_cifar}
    \vspace{-0.2in}
\end{figure}

% Note that comparing watermark accuracies between different schemes is difficult due to the different decision thresholds. Only the watermark accuracy (or retention rate) cannot be used in determining if a scheme is robust or an attack succeeded.

% \begin{figure}[t]
%     \centering
%     \includegraphics[width=3.2in]{pics/threshold_cifar100.pdf}
%     \caption{Required watermark accuracy ($1-\theta$) vs. threshold probability $p$ on CIFAR-100.}
%     \label{fig:th_cifar100}
%     \vspace{-0.15in}
% \end{figure}

\vspace{-0.03in}
\subsection{Watermark Marginal Utility}
\label{sec:margin}

Comparing the required watermark accuracies across different schemes is inherently challenging due to the different decision thresholds. Therefore, relying only on the required watermark accuracy (or retention rate) is insufficient to evaluate the robustness of a watermarking scheme or the effectiveness of an attack.

An effective metric is needed to be derived to quantify the contribution of retained watermarks (i.e., watermark accuracy in a surrogate model) in terms of probabilistic guarantees on the success probability of random guessing attacks.
Therefore, we define this metric as \emph{watermark marginal utility}, which represents the average reduction factor in the attack success probability per unit increase in watermark accuracy (or decision threshold) within the surrogate model.
In other words, the metric indicates the degree of reduction that can be achieved in the threshold probability $p$ by increasing the watermark accuracy (or decreasing the decision threshold $\theta$) in a surrogate model. 

The watermark marginal utility is illustrated in Figure~\ref{fig:th_cifar} by dividing the ratio of $p$-values by the difference in watermark accuracy for two consecutive $p$-values on the x-axis.
% The estimated WM marginal utility values may be found from the results shown in Figures 8 and 9 
% by dividing the p-value ratio by the WM accuracy difference between two consecutive p-values on x-axis, and then by find an average of those pairwise utilities estimated. 
If the $p$-value is reduced from $p_1$ to $p_2$ and the corresponding required watermark accuracy increases from $\tau_1$ to $\tau_2$, the watermark marginal utility can be calculated as ${({p_1}/{p_2})}/{(\tau_2 - \tau_1)}$. For example, if the $p$-value reduces from 0.05 to 0.005 and the corresponding watermark accuracy increases from 0.1123 to 0.1204, the estimated watermark marginal utility is computed as (0.05/0.005)/(0.1204 - 0.1123) = 1234.56.

Figure~\ref{fig:marginal_utility} presents the computed watermark marginal utilities for CIFAR10 and CIFAR100 models across different watermarking schemes.
The results show that \TN{} provides a higher watermark marginal utility compared to other schemes.
The marginal utility values for CIFAR-100 models are not provided for the \emph{Adi}, \emph{Content}, \emph{Noise}, and \emph{Unrelated}-based schemes due to the limitations in computing small $p$-values with empirical estimation method~\cite{lukas2022sok}.

\subsection{Overhead}

Training the model on CIFAR-10 for 200 epochs takes 2 hours, with a RAM usage of 2.5 GB and a GPU memory usage of 2.4 GB.
Training the model on CIFAR-100 for 200 epochs takes 7 hours, with a RAM usage of 3.7 GB and a GPU memory usage of 6.6 GB.
When watermark images are applied in the training set, the computational overhead remains negligible and does not significantly affect training time or memory usage.

\begin{figure}[t]
    \centering
    \includegraphics[width=2.2in]{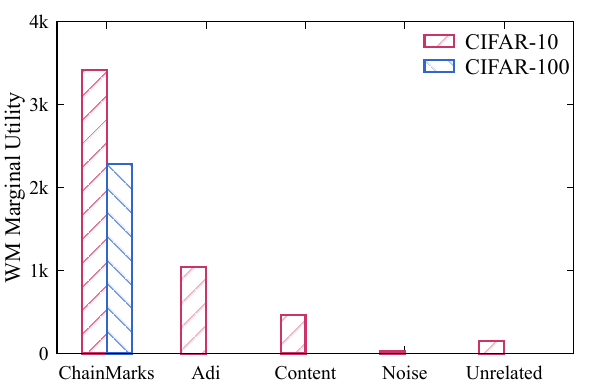}
    \vspace{-0.15in}
    \caption{Watermark marginal utility for various schemes.}
    \label{fig:marginal_utility}
    \vspace{-0.2in}
\end{figure}
\vspace{-0.03in}
\section{Security Analysis}

\subsection{Defeating Watermark Ambiguity Attacks}

The two cryptographic constraints introduced in \TN{} for trigger inputs and target labels render any optimization-based attacks, such as watermark ambiguity attacks~\cite{guo2018watermarking, fan2019rethinking, lao2022deepauth, li2019piracy}, infeasible.
In ambiguity attacks, attackers find adversarial watermarks by optimizing with the ``perturbed input - expected output'' pairs. 
When trigger inputs are independent (see Figure~\ref{fig:scheme}), such objectives are easily optimized since the added items are independent.
However, with cryptographic chaining, each trigger depends on a one-way hash function, which lacks gradients and thus obstructs backpropagation.
As illustrated in Figure~\ref{fig:optim_atk}, even when attackers inject trainable noise into fake triggers and optimize noise to match the output with digital signature, the optimized inputs break the required cryptographic chain, invalidating the watermark structure.

An adversary may also launch a guessing attack with trial and error. It can first choose a random seed key to create a one-way trigger input chain. Then, the adversary simply applies the original DNN model in a feedforward manner to check if the output labels match the claimed digital signature. Attackers can repeatedly attempt with different seed keys until a match is obtained. However, according to our analysis in Section~\ref{sec:threshold}, the success rate of random guessing is extremely low and can be determined by the selected threshold. Therefore, this approach will require an exponential number of trials, rendering it computationally infeasible.

\vspace{-0.03in}
\subsection{Countering Watermark Removal Attacks}
\label{sec:robust}

Because trigger inputs are derived from hash values, they can be regarded as random noise. 
Compared to the data distribution of primary task (e.g., image object recognition), these noise-like triggers are out-of-distribution with respect to both training and fine-tuning datasets. 
Thus, these triggers are robust against removal attacks, since fine-tuning typically alters model behavior within the task-specific feature space, leaving the trigger space unaffected.

Evaluating the success of an attack requires careful consideration of both test accuracy loss and watermark accuracy degradation in surrogate models.
To assess watermark robustness, we deploy five watermarking schemes, including~\TN{}, to construct watermarked models trained on the CIFAR-10/CIFAR-100 datasets.
As listed in Table~\ref{tab:robust}, we then apply 17 distinct attacks to the watermarked (source) models.
Let $M_s$ denote the surrogate model obtained by an attacker, and let $p$ represent the threshold probability used to derive the decision threshold $\theta$.
For an attack to be considered successful, it must meet both of the following criteria.

\begin{figure}[t]
    \centering
    \includegraphics[width=2.2in]{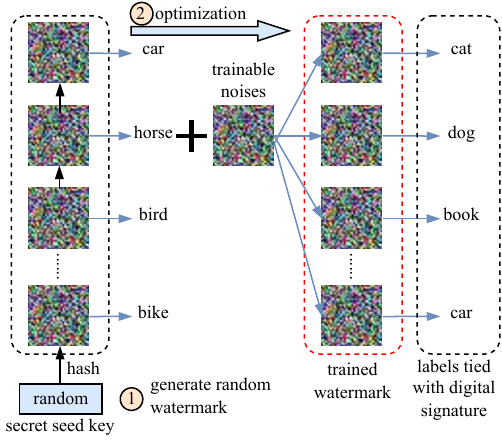}
    \vspace{-0.15in}
    \caption{Attackers can use optimization to generate adversarial trigger inputs with associated invalid signature labels, but the resulting inputs are no longer cryptographic chained.}
    \label{fig:optim_atk}
    \vspace{-0.18in}
\end{figure}

\noindent{\bf Test Accuracy Criteria.} 
To satisfy the accuracy drop threshold of 0.1, the test accuracy of the surrogate model should be at least 90\% of that of the watermarked model.

\noindent{\bf Watermark Accuracy Criteria.}
The watermark accuracy in $M_s$ should be less than the original decision threshold $\theta(M_W, p)$, where the threshold probability $p=0.01$.

We obtain the decision threshold $\theta(M_W, p)$ for each watermarking scheme.
For \TN{}, the threshold is derived using a two-phase Monte Carlo estimation method, as described in Section~\ref{sec:threshold}.
For the other four watermarking schemes, no precise method exists for computing $\theta$; therefore, we adopt the empirical estimation technique proposed in~\cite{lukas2022sok}.
This approach estimates the watermark accuracy of an unmarked model with two random variables. 
Specifically, we estimate the cumulative probability that a randomly generated watermark key (image and label) yields a watermark accuracy exceeding a specified threshold on an unmarked model.

Under the independence assumption, we generate 100 random watermarking keys and evaluate their label-matching accuracy on a set of 30 unmarked models.
The distribution of matching counts is approximated using a cumulative normal distribution, and the decision threshold is selected to correspond to a $p$-value of 0.05.
However, this technique provides only a rough estimate due to two limitations: 
each model has a unique classification probability distribution, even with the same architecture, owing to variations in training data and hyper-parameters; and (ii) the distribution of matching probabilities for random watermark keys is not explicitly modeled.
Section~\ref{sec:pvalues} has demonstrated the limitations in calculating the cumulative probability functions for small $p$-values.

\begin{table}[t]
    \centering
    \renewcommand{\arraystretch}{1.15}
    \resizebox{\linewidth}{!}{
    \begin{tabular}{c|p{0.62in}<{\centering}|p{0.5in}<{\centering}|p{0.5in}<{\centering}|p{0.5in}<{\centering}|p{0.5in}<{\centering}}
    \toprule
    \multirow{2}{*}{\bf Attack Types} & \multicolumn{5}{c}{\bf Robust (-) or Vulnerable (V) for CIFAR-10 / CIFAR-100} \\
    \cmidrule{2-6}
    {} & {\bf \TN{}} & {\bf Adi} & {\bf Content} & {\bf Noise} & {\bf Unrelated} \\
    \midrule
    {WM Ambiguity Attack} & {-/-} & {V/V} & {V/V} & {V/V} & {V/V} \\
    {Adaptive Denoising} & {-/-} & {-/-} & {-/-} & {-/-} & {-/-} \\
    {JPEG Compression} & {-/-} & {-/-} & {-/-} & {-/-} & {-/-} \\
    {Input Quantization} & {-/-} & {-/-} & {-/-} & {-/-} & {-/-} \\
    {Input Smoothing} & {-/-} & {-/-} & {-/-} & {-/-} & {-/-} \\
    {Adversarial Training} & {-/-} & {-/-} & {-/-} & {-/-} & {-/-} \\
    {Fine-Tuning (RTAL)} & {-/-} & {-/-} & {-/-} & {-/-} & {-/-} \\
    {Fine-Tuning (RTLL)} & {-/-} & {-/-} & {-/-} & {-/-} & {-/-} \\
    {Fine-Tuning (FTAL)} & {-/-} & {-/-} & {V/V} & {-/-} & {V/V} \\
    {Fine-Tuning (FTLL)} & {-/-} & {-/-} & {-/-} & {-/-} & {-/-} \\
    {Weight Quantization} & {-/-} & {-/-} & {-/-} & {-/-} & {-/-} \\
    {Weight Pruning} & {-/-} & {-/-} & {-/-} & {-/-} & {-/-} \\
    {Regularization} & {-/-} & {V/-} & {V/-} & {-/-} & {V/-} \\
    {Retraining} & {-/-} & {-/-} & {V/V} & {V/-} & {V/V} \\
    {Transfer Learning} & {V/V} & {V/V} & {V/V} & {V/V} & {V/V} \\
    \multirow{2}{*}{\shortstack{Cross-Architecture\\Retraining}} & \multirow{2}{*}{-/-} & \multirow{2}{*}{-/-} & \multirow{2}{*}{V/-} & \multirow{2}{*}{-/-} & \multirow{2}{*}{V/-} \\
    {} & {} & {} & {} & {} & {} \\
    {Adversarial Training} & {-/-} & {-/-} & {-/-} & {-/-} & {-/-} \\
    \bottomrule
    \end{tabular}
    }
    \caption{Robustness of different watermarking schemes against 17 attack types (threshold probability $p=0.01$).}
    \label{tab:robust}
    \vspace{-0.3in}
\end{table}

After evaluating each watermarking scheme against 17 distinct attack types, we present their robustness in Table~\ref{tab:robust}. 
The results show that \TN{} is resistant to the watermark ambiguity attack, whereas all four baseline schemes are vulnerable.
Also, against the remaining 16 attacks, \TN{}, \emph{Adi}, and \emph{Noise}-based scheme exhibit relatively higher robustness.
However, compared to \TN{}, \emph{Adi} is vulnerable to regularization attacks, and \emph{Noise}-based scheme is vulnerable to retraining attacks. 
With the exception of transfer learning where the $p$-value is 0.012/0.035 on CIFAR-10/CIFAR-100, \TN{} consistently achieves $p$-values between $6 \times 10^{-3}$ and $1 \times 10^{-8}$ across all other watermark removal attacks.
Thus, \TN{} is the most robust scheme that is able to resist multiple watermark removal attacks.

\vspace{-0.03in}
\section{Discussion}

\subsection{{Usability}}

The \TN{} scheme can be adopted by model owners (whether commercial vendors or individual developers) to protect their intellectual property. 
Based on dynamic watermarking, \TN{} does not interfere with the model’s primary functionality, since the triggers are out-of-distribution inputs that resemble random noise and do not affect regular inference.
We tried various hash functions (e.g., MD5, SHA1, SHA128) and observed the selection does not affect the final results. 
Block cipher in counter mode could be an alternative; however, hashing is faster and the hard-bound is not an issue.
Moreover, \TN{} introduces negligible training overhead since the number of triggers is not comparable to the size of original training dataset.
Besides, cryptographic chains can be extended to watermark RNNs and LLMs, but in different formats.

\TN{} provides a higher security guarantee due to its higher marginal utility. 
Specifically, for the ResNet-18 model trained on CIFAR-10, 14 matches out of 100 triggers are sufficient to support a successful ownership claim. 
For models trained on CIFAR-100, even fewer matches are required, as the threshold is dependent on the output dimension. 
A larger output space corresponds to a lower random guessing probability, thereby allowing a more relaxed matching requirement.
Besides, the watermarks can be embedded through fine-tuning, using initial weights and learning rates that differ from those used in ``training-from-scratch".

In practical watermark verification, it is not necessary to disclose the entire key chain. 
For example, for a model trained on CIFAR-10, ownership can be verified by presenting only the first 20 trigger inputs ($B_1$ to $B_{20}$). 
If at least 14 of 20 triggers match, the ownership claim is considered valid.
The remaining triggers can be reserved for further verification rounds. 
Also, due to the one-way property, any unused triggers in key chain remain secure and undisclosed.

\vspace{-0.03in}
\subsection{Scalability}
To extend \TN{} to larger and more complex datasets (e.g., ImageNet), several adjustments are required to the watermark configuration.
Although the increased class number reduces the likelihood of a single successful guess, the probability of accidentally achieving the minimum match threshold may not decrease proportionally and can even increase.
Thus, the chain length $L$ should be increased accordingly, but kept sufficiently short to preserve the nature of out-of-distribution.
In addition, the digital signature should be encoded in a higher-base numeral system to match the class number.
Due to the higher dimension of ImageNet data, hash-like triggers are more likely to be memorized, as they reside in sparser regions of the data manifold, far from natural image distributions.

Hyperparameters should be selected carefully to balance watermark security, robustness, and efficiency. We recommend setting $L \geq \sqrt{C}$, where $C$ is the output class number, to ensure sufficient watermark entropy. The $p$-value, which determines the Hamming distance threshold, should range between $10^{-2}$ and $10^{-6}$, depending on the desired security level. While the specific hash function has limited impact on cryptographic strength, we recommend strong cryptographic hashes (e.g., SHA-256) for high-security applications.

\vspace{-0.03in}
\subsection{Limitations and Future Work}

The main contribution of ChainMarks is to defeat watermark ambiguity attacks, which are emerging threats against all existing DNN model watermarking methods. 
ChainMarks cannot effectively defeat removal attacks via transfer learning and knowledge distillation. 
In fact, none of the existing watermarking techniques is robust against these two methods.
Besides, \TN{} focuses on watermarking the classification models with the inputs of images. However, our idea of watermarking with a key chain can be extended to other input formats or other modeling tasks.
For text data, it is feasible to convert a hash value into a word (i.e., word ID)~\cite{li2023plmmark} or a pseudorandom string~\cite{wang2024dye4ai}.
For graph-based input, we can transform a binary hash value into an adjacency matrix to generate graph-structured data.
We leave this topic to future work.

\vspace{-0.03in}
\section{Related Work}

\subsection{Backdoor Poisoning Attacks}
Backdoor poisoning attack is a special case of targeted poisoning attacks that maintain overall performance and induce misbehaviors in triggers~\cite{severi2021explanation}. 
Data manipulation is the main technique for backdoor poisoning attacks~\cite{tian2022comprehensive}. Adversaries introduce either visible~\cite{gu2017badnets, li2021backdoor} or invisible~\cite{li2021invisible, pan2022hidden, liu2022loneneuron, quiring2020backdooring} patterns into poisoning samples.
Also, triggers for poisoned samples can be generated by optimization to achieve better performance~\cite{liu2018trojaning, zhao2020clean, li2020invisible}. Semantic backdoor attacks leverage the semantic part of the samples as trigger patterns, so it is unnecessary to modify the input at inference time~\cite{bagdasaryan2021blind, bagdasaryan2020backdoor}. 
Similarly, a hidden backdoor can be activated by combining certain objects in images~\cite{lin2020composite}. It is possible to conceal triggers using image scaling attacks~\cite{xiao2019seeing}.
However, almost all backdoor attacks are sample-agnostic and therefore can be defeated by trigger-synthesis-based defenses~\cite{wang2019neural} and saliency-based defenses~\cite{chou2020sentinet}. Hence, sample-specific backdoor attacks are proposed to contain different trigger patterns for different poisoned samples~\cite{nguyen2020input, li2021invisible}. Backdoor attacks can also be launched in the physical world using a pair of glasses~\cite{chen2017targeted} or a post-it note~\cite{gu2017badnets}. In addition, backdoor attacks can be applied in different fields, e.g., computer vision~\cite{jia2022badencoder, yao2019latent, noppel2022disguising}, natural language processing~\cite{chen2021badnl}, speech recognition~\cite{zhai2021backdoor}, software code~\cite{yang2022jigsaw}, and graph learning~\cite{xi2021graph,zhang2021backdoor}.

\vspace{-0.03in}
\subsection{DNN Watermarking Schemes}

% \vspace{0.01in}
\noindent{\bf White-box/Black-box/Box-free Watermarking.}
Based on the information accessible during watermark verification, watermarking schemes can be classified as white-box, black-box, and box-free~\cite{li2022backdoor}.
A white-box watermarking scheme grants users or adversaries access to the internal information of DNNs, e.g., model structures, model weights, and hyperparameters~\cite{kuribayashi2021white, lv2023robustness, uchida2017embedding}.
However, due to the strong assumption, white-box watermarking has a larger capacity but limited applicability~\cite{chen2019blackmarks}.
Black-box watermarking schemes only allow users to access the final outputs of DNN models by feeding a set of inputs~\cite{ adi2018turning, le2020adversarial}, allowing IP protection for Machine Learning as a Service (MLaaS)~\cite{kapusta2021protocol}.
Box-free watermarking is similar to the black-box one; however, this mechanism is only applied to DNNs with high-dimensional outputs, i.e., image processing models, since watermarks can be embedded into the outputs for any inputs by a high output entropy~\cite{zhang2020model, wu2020watermarking}.

% \vspace{0.01in}
\noindent{\bf Static vs. Dynamic Watermarking.}
Watermarking schemes can be classified as static or dynamic based on watermarking methods~\cite{barni2021dnn}.
Static methods embed watermarks in the static DNN parameters that are not changed during the operation.
For example, watermarks can be embedded as the probability distribution of weights~\cite{chen2019deepmarks} or model weights~\cite{uchida2017embedding, tartaglione2021delving}.
However, most static methods imply white-box watermarking, since the model parameters need to be accessible during verification.
Dynamic methods associate watermarks with the network behaviors in correspondence to some specific inputs~\cite{adi2018turning, rouhani2019deepsigns, szyller2021dawn, le2020adversarial, zhang2018protecting, guo2018watermarking, zhang2020model}.
A set of secret inputs/patterns with target labels (i.e., triggers) are carefully crafted as the watermarks to be ingrained into the DNN in the training process along with the original data set. 
In watermark verification, model behaviors will be tested to verify the presence of the watermark. 
Dynamic watermarking usually generates trigger input by leveraging DNN backdoor poisoning attacks~\cite{li2023black}.
However, dynamic watermarking does not imply black-box watermarking, as it can also be used as white-box watermarking.
For example, Rouhani et al. use activation maps to embed and verify watermarks~\cite{rouhani2019deepsigns}.

% \vspace{0.03in}
\noindent{\bf Zero-bit vs. Multi-bit Watermarking.}
Based on the type of watermark contents disclosed in the verification, watermarking schemes can be classified as zero-bit and multi-bit~\cite{li2023universal}.
In zero-bit watermarking, only the watermark presence is detected~\cite{adi2018turning, le2020adversarial, zhang2018protecting}; while in multi-bit watermarking, both the watermark and its presence should be present in the verification process~\cite{chen2019blackmarks}.
A multi-bit watermarking scheme can be converted to a zero-bit one.

\vspace{-0.03in}
\subsection{Attacks/Defenses on DNN Watermarks}

DNN watermark attacks include model modification attacks, evasion attacks, and active attacks~\cite{xue2021dnn}.

\noindent {\bf Model Modification Attacks.}
Model weights are often modified by the pirate. Model modification attacks include model fine-tuning~\cite{uchida2017embedding, chen2019leveraging, chen2021refit}, model pruning or parameter pruning~\cite{rouhani2019deepsigns}, model weight compression~\cite{uchida2017embedding}, and model retraining~\cite{cao2021ipguard, namba2019robust}.

\noindent {\bf Evasion Attacks.}
Evasion attacks are more complicated.
Shafieinejad et al. investigate the removal of backdoor-based watermarks with white-box, black-box, and inference attacks~\cite{shafieinejad1906robustness}.
Also, DNN laundering is shown to reset backdoor watermarks~\cite{aiken2021neural}.
Attackers can manipulate a model to remove the owner's signature if watermark presence is known in advance~\cite{guo2018watermarking}.
Reverse engineering can be used if the original training dataset is obtained~\cite{fan2019rethinking}.
Liu et al. propose a data augmentation scheme to mimic the backdoor trigger behaviors~\cite{liu2021removing}.
Gong et al. dynamically adjust the learning rate to purify backdoors~\cite{gong2023redeem}.
Attackers can leverage resource-efficient attacks~\cite{ilyas2018black, tu2019autozoom, suya2020hybrid} to find black-box adversarial examples.
By combining hybrid attacks with seed prioritization, adversarial examples can be obtained using only a few queries.
The two main optimization techniques used in attacks are AutoZOOM~\cite{tu2019autozoom} and NES~\cite{ilyas2018black}.

\noindent {\bf Active Attacks.}
Ambiguity attack tends to forge an additional watermark on the DNN model to doubt the ownership verification~\cite{fan2019rethinking, chen2023effective}.
Also, several methods are proposed to detect watermarks for further attacks~\cite{xu2017feature, ma2022beatrix, cheng2023beagle}.
Attackers can overwrite the watermarks if they know the watermarking method~\cite{chen2019deepmarks, chen2019blackmarks, rouhani2019deepsigns}.
In addition, it is possible to prevent the copyright owner from verifying the ownership by using a watermark collusion attack~\cite{chen2019deepmarks}. Similarly, attackers can also detect and modify the watermark query to prevent watermark verification~\cite{namba2019robust}.

\noindent {\bf Countermeasures.}
Entangled watermarks increase the similarity between watermarks and task features, improving resistance to trigger detection~\cite{jia2021entangled}; however, \TN{} prevents attackers from adding new ambiguous watermarks.
DynaMarks defeats model extraction attacks by dynamically changing the responses of the model's prediction API during the inference phase~\cite{chakraborty2022dynamarks}.
Bansal et al. propose randomized smoothing to improve the difficulty of watermark removal attacks~\cite{bansal2022certified}.
To defeat watermark ambiguity attack, Fan et al. propose a passport layer so that model performance deteriorates due to forged signatures~\cite{fan2019rethinking}.
\vspace{-0.05in}
\section{Conclusion}

We propose a new DNN watermarking scheme, \TN{}, which is resistant to watermark ambiguity attacks by introducing cryptographic constraints among watermark triggers and target labels, along with the model owner’s digital signature. 
Experiments show that \TN{} exhibits higher or comparable levels of resistance compared to other watermark schemes against various watermark attacks, including input processing, model modification, and model extraction attacks.
To determine watermark decision threshold, the proposed two-phase Monte Carlo method shows its accuracy and applicability across a range of watermarked DNN models.
The marginal utility of \TN{} is higher than that of the other schemes, providing a higher probability guarantee of the watermark presence in the DNN models with the same level of watermark accuracy.

\vspace{-0.05in}
\begin{acks}
This work is partially supported by Office of Naval Research grant N00014-23-1-2122 and Commonwealth Cyber Initiative award N-3Q25-002.
\end{acks}
%---------------------------------------
\bibliographystyle{ACM-Reference-Format}
\bibliography{reference}
%---------------------------------------
\end{document}